\documentclass[10pt,final, twocolumn]{IEEEtran}
\usepackage{amsmath,amssymb,amsfonts}
\usepackage{graphicx,color}
\usepackage{textcomp}
\usepackage{xcolor}
\usepackage{hyperref}
\hypersetup{hidelinks=true}
\usepackage[linesnumbered,ruled,vlined,noend]{algorithm2e}

\usepackage{amssymb,amsthm}
\usepackage{epsfig,verbatim}
\usepackage{caption}
\usepackage{subcaption}
\usepackage{setspace}
\usepackage[flushleft]{threeparttable}
\usepackage{color}
\usepackage{epstopdf}
\usepackage{accents}
\usepackage{acronym}
\usepackage{booktabs}
\usepackage{mathtools} 
\usepackage{enumitem}
\usepackage{multirow}
\usepackage{dblfloatfix}
\usepackage{bbm}
\AtBeginDocument{\definecolor{tmlcncolor}{cmyk}{0.93,0.59,0.15,0.02}\definecolor{NavyBlue}{RGB}{0,86,125}}

\newcommand{\myVec}[1]{{\boldsymbol{#1}}}

\newcommand{\mySet}[1]{\mathcal{#1}}
	
	
\newcommand{\mys}{{\myVec{s}}}	

\newcommand{\Weights}{\myVec{\theta}}
\newcommand{\WeightsSet}{\myVec{\Theta}}
\newcommand{\Network}{\myVec{\varphi}}
\DeclareMathOperator*{\argmin}{argmin}
\makeatletter
\newcommand{\removelatexerror}{\let\@latex@error\@gobble}
\makeatother

\newcommand{\Blklen}{B}			 			
\newcommand{\Blkset}{\mySet{\Blklen}}

\newcommand{\figheight}{0.198\textheight}
\newcommand{\figwidth}{0.9\textwidth}

\makeatletter

\makeatother

\newtheorem{corollary}{Corollary}
\newtheorem{proposition}{Proposition}

\acrodef{fc}[FC]{fully-connected}
\acrodef{adc}[ADC]{analog-to-digital convertor}
\acrodef{cs}[CS]{compressed sensing}
\acrodef{dtft}[DTFT]{discrete-time Fourier transform}
\acrodef{dnn}[DNN]{deep neural network} 
\acrodef{csi}[CSI]{channel state information}
\acrodef{bpsk}[BPSK]{binary phase shift keying}
\acrodef{map}[MAP]{maximum a-posteriori probability}
\acrodef{snr}[SNR]{signal-to-noise ratio}
\acrodef{bs}[BS]{base station} 
\acrodef{iot}[IOT]{Interent of Things}
\acrodef{mimo}[MIMO]{multiple-input multiple-output}
\acrodef{siso}[SISO]{single-input single-output}
\acrodef{mse}[MSE]{mean-squared error}
\acrodef{pdf}[PDF]{probability density function}
\acrodef{rv}[RV]{random variable}
\acrodef{ml}[ML]{machine learning}
\acrodef{mc}[MC]{monte carlo}
\acrodef{fec}[FEC]{forward error correction}
\acrodef{qpsk}[QPSK]{quadrature phase shift keying}
\acrodef{rs}[RS]{Reed-Solomon}
\acrodef{lti}[LTI]{linear time-invariant}
\acrodef{wss}[WSS]{wide-sense stationary}
\acrodef{psd}[PSD]{power spectral density}
\acrodef{ser}[SER]{symbol error rate} 
\acrodef{ber}[BER]{bit error rate} 
\acrodef{gd}[GD]{gradient descent}
\acrodef{sgd}[SGD]{stochastic gradient descent} 
\acrodef{isi}[ISI]{intersymbol interference}  
\acrodef{awgn}[AWGN]{additive zero-mean white Gaussian noise} 
\acrodef{ut}[UT]{user terminal} 
\acrodef{mmw}[mmWave]{millimeter wave}
\acrodef{noma}[NOMA]{non-orthogonal multiple access}
\acrodef{mac}[MAC]{mulitple access channel}
\acrodef{fl}[FL]{Federated learning}
\acrodef{lstm}[LSTM]{long short-term memory}
\acrodef{maml}[MAML]{model-agnostic meta-learning}
\acrodef{sic}[SIC]{soft interference cancellation}
\acrodef{pmf}[PMF]{probability mass function}
\acrodef{crc}[CRC]{cyclic redundancy check}
\acrodef{ece}[ECE]{expected calibration error}
\acrodef{lbd}[LBD]{learnable Bernoulli dropout}
\acrodef{arm}[ARM]{Augment-REINFORCE-Merge}
\acrodef{kl}[KL]{Kullback-Leibler}
\acrodef{ai}[AI]{artificial intelligence}
\acrodef{ddm}[DDM]{Drift Detection Method}
\acrodef{pht}[PHT]{Page Hinkley Test}
\acrodef{rnn}[RNN]{recurrent neural network}
\acrodef{cnn}[CNN]{convolutional neural network}

\setlength{\textfloatsep}{0pt}

\begin{document}

\markboth{}{Author {et al.}}

\title{Modular Hypernetworks for Scalable and Adaptive Deep MIMO
Receivers}

\author{  
	\IEEEauthorblockN{Tomer Raviv and  Nir Shlezinger
	} 
	\thanks{
		T. Raviv and N. Shlezinger are with the School of ECE, Ben-Gurion University of the Negev, Beer-Sheva, Israel (e-mail: tomerraviv95@gmail.com; nirshl@bgu.ac.il).}

	
}

\maketitle

\begin{abstract}
\Acp{dnn} were shown to facilitate the operation of uplink \ac{mimo} receivers, with emerging architectures augmenting modules of classic receiver processing. Current designs consider static \acp{dnn}, whose architecture is fixed and weights are pre-trained. This induces a notable challenge, as the resulting \ac{mimo} receiver is suitable for a given configuration, i.e., channel distribution and number of users, while in practice these parameters change frequently with network variations and users leaving and joining the network. 
%
In this work, we tackle this core challenge of \ac{dnn}-aided \ac{mimo} receivers. We build upon the concept of {\em hypernetworks}, augmenting the receiver with a pre-trained deep model whose purpose is to update the weights of the \ac{dnn}-aided receiver upon instantaneous channel variations. We design our hypernetwork to augment {\em modular} deep receivers, leveraging their modularity to have the hypernetwork adapt not only the weights, but also the architecture. Our  modular hypernetwork leads to a \ac{dnn}-aided receiver whose architecture and resulting complexity adapts to the number of users, in addition to channel variations, without  re-training.  Our numerical studies demonstrate superior error-rate performance of modular hypernetworks in time-varying channels compared to static pre-trained receivers, while providing rapid adaptivity and scalability to network variations.   
\end{abstract}

\begin{IEEEkeywords}
Model-based deep learning, deep receivers, \ac{mimo}, hypernetworks.
\end{IEEEkeywords}

\section{Introduction}
Deep learning is envisioned to play a key role in enabling  future wireless communication systems to meet their constantly growing demands~\cite{dai2020deep}. A core aspect which is expected to greatly benefit from proper augmentation of deep learning techniques is receiver design~\cite{wang2020artificial}.  \acp{dnn} can facilitate coping with both {\em model-deficiency}~\cite{simeone2022machine}, i.e., complex channel models, as well as {\em algorithm-deficiency}~\cite{kee2024review},  where classic receiver processing is not suitable.

Despite their potential, deployment of \ac{dnn}-aided receivers, termed {\em deep receivers}, is subject to  several  challenges, arising from the fundamental differences between wireless communication 
 and traditional deep learning domains (such as vision and natural language processing)~\cite{wang2020artificial,raviv2024adaptive}. 
These challenges include the {\em dynamic nature} of wireless channels~\cite{wang2020artificial,aoudia2021end}, and the {\em limited compute/power resources} of wireless devices~\cite{raviv2024adaptive,zhu2020toward}. The dynamic nature of wireless channels implies that the receiver task, dictated by the data distribution and the number of received messages,  changes rapidly in time. This evolution can occur either on the physical level, e.g., variations in the \ac{snr} and channel transfer function, or  the network level, e.g., the number of users transmitting  in the uplink.

A common approach to deal with dynamic channels, coined {\em joint learning}~\cite{oshea2017introduction}, trains the \ac{dnn} over a wide range of channel conditions. Once trained, the \ac{dnn}-aided receiver is deployed statically~\cite{xia2020note}, namely, its weights and architecture do not vary.  This form of learning seeks a non-coherent receiver at the cost of performance degradation as compared with coherent operation~\cite{raviv2024adaptive}, and cannot cope with variations in the network, as its number of inputs and outputs is fixed. 
While static receivers can operate under some forms of channel variations by providing an  estimate of the channel parameters as additional features~\cite{honkala2021deeprx}, the  architecture is still fixed, and cannot adapt to network variations. 

An alternative approach aims at providing increased flexibility by re-training the \ac{dnn} on device. Such {\em online learning} uses pilots~\cite{shlezinger2019viterbinet,shlezinger2019deepsic}, data augmentation~\cite{raviv2022data} and locally decoded messages~\cite{schibisch2018online} to repeatedly adapt the weights of the receiver upon channel variations. While online learning yields increased flexibility to dynamic channels, it induces notable excessive complexity due to its frequent re-training procedures. 
Existing approaches to facilitate online learning include:  
$(i)$ Designing deep receiver architectures as a form of model-based deep learning~\cite{shlezinger2022model, shlezinger2023model}, using classic receiver processing as an informative and parameters-compact inductive bias~\cite{balatsoukas2019deep, zappone2019wireless}, that makes them more amenable to rapid adaptation compared to black-box \acp{dnn}~\cite{raviv2024adaptive};
$(ii)$ Altering learning algorithms by, e.g., incorporating meta-learning to achieve fast training~\cite{raviv2022online, park2020learning}, or combining Bayesian learning to mitigate overfitting on scarce data~\cite{zecchin2023robust, raviv2023modular}; 
$(iii)$ Reducing online training frequency by, e.g., using dedicated mechanisms to detect when to re-train and which module requires adaptation~\cite{uzlaner2024concept}. 

Despite these recent advances in online learning, its excessive complexity makes it  challenging to implement  due to the limited resources of wireless devices~\cite{raviv2024adaptive,zhu2020toward}. Moreover, the aforementioned training approaches have primarily focused on adjusting a {\em fixed architecture}, and are not designed to handle settings necessitating {\em changes in the architecture}.

From a deep learning perspective, adaptivity of \acp{dnn} can be achieved using   {\em hypernetworks}~\cite{ha2016hypernetworks}.
Hypernetworks are \acp{dnn} whose outputs are utilized as the weights for another primary \ac{dnn}. They have been demonstrated in \cite{galanti2020modularity} to achieve low generalization error accross different data distributions with only a small additional complexity during inference. In  wireless communications, hypernetworks have primarily been considered to adapt receivers  for physical layer variations~\cite{goutay2020deep,liu2024hypernetwork,nachmani2019hyper,pratik2021neural}. For instance, in \cite{liu2024hypernetwork}, a low-complexity hypernetwork was employed to adjust the parameters of a black-box channel predictor \ac{dnn}, with the goal of handling non-stationary wireless channels. For detection, the work  \cite{goutay2020deep} proposed using hypernetworks to mitigate the necessity of retraining a \ac{dnn}-based \ac{mimo} receiver for each channel realization. However, these studies have only considered changes in the channel that do not require {\em architectural adaptation}, which is extremely challenging and complex to realize using black-box \acp{dnn}. The usefulness of model-based deep learning for deep receivers, combined with  its modularity and the ability to associate internal modules with different users, motivate integrating hypernetworks with  modular architectures to yield deep receivers that  rapidly adapt to variations in both the channel and the network without  online training. 


In this work we address this gap and propose an approach that  adapts \ac{dnn}-aided receivers to variations in both the channel and the network without necessitating additional re-training. We do so by fusing the modularity of model-based deep learning receiver architectures, combined with a modular hypernetwork that generates during inference the {\em weights} for {a \em varying number of modules} of a deep receiver.
In doing so, we preserve the low complexity overhead of static model-based deep learning, while enjoying some level of adaptivity as offered by online learning, without its excessive compute and limitation to a fixed architecture.

We focus on uplink \ac{mimo} receivers, considering the modular \ac{dnn}-aided DeepSIC architecture of~\cite{shlezinger2019deepsic}, which is comprised of a set of modules corresponding to different users, as our main running deep receiver. For such  modular architectures, we develop a lightweight hypernetwork framework that learns  to generate {\em complete sub-modules} and weights of the architecture. The resulting modular hypernetwork allows the deep receiver  to accommodate an arbitrary number of users with different time-varying channel conditions, while providing  elastic inference complexity, as its number of \acp{dnn} modules is adapted to match the varying number of users. We introduce a dedicated training method that can be carried out offline based on channel measurements and simulations, without necessitating on-device adaptation. 
We numerically demonstrate the effectiveness of our modular hypernetwork, showing that it allows the deep receiver not only to outperform joint training in time-varying channels, but also approach the \ac{ser} of computationally intensive online training, while maintaining a significantly lower complexity, as shown analytically.


The rest of this paper is organized as follows: 
Section~\ref{sec:background} reviews the system model. Section~\ref{sec:method} presents our modular hypernetwork-based framework  and analyzes its complexity. Section~\ref{sec:Simulation} numerically evaluates the suggested framework, and  Section~\ref{sec:conclusion} concludes the paper.

\section{System Model}
\label{sec:background}


\subsection{Communication System}
\label{subsec:system_model}
We consider an uplink block-fading \ac{mimo} digital communication system. The system supports up to $K^{\rm max}$ single-antenna users, that are transmitting  symbols to a base station equipped with $N$ antennas, where $N\geq K^{\rm max}$. 
Each block is comprised of $\Blklen^{\rm tran}$ time instances, during which the channel parameters, denoted $\myVec{H}[t]\sim \mathcal{H}$ for the $t$th block (where $\mathcal{H}$ is the distribution of channel coefficients), are constant. 
We allow the number of users to change between blocks, corresponding to users joining and leaving the network. Accordingly, we denote the number of users in the block with index $t$ by $K[t] \in \{1,\ldots, K^{\max}\}$. Specifically, at each $t$th block, $K[t]$ users simultaneously transmit $\Blklen^{\rm tran}$ independent messages denoted $\myVec{s}_i[t] = \big[\myVec{s}_i^{(1)}[t],\ldots,\myVec{s}_i^{(K[t])}[t]\big]\in\mySet{S}^{K[t]}$, where $i \in \Blkset\triangleq \{1,\ldots, \Blklen^{\rm tran}\}$ is the symbol index within the block,  and $\mySet{S}$ is the set of constellation points. The transmitted symbols block $\mys ^{\rm tran}[t]:= \{\myVec{s}_i[t]\}_{i\in \Blkset}$ is divided into $\Blklen^{\rm pilot}$ pilots that are known to the receiver and appear first,  denoted $\myVec{s}^{\rm pilot}[t]$, and  $\Blklen^{\rm info} = \Blklen^{\rm tran}-\Blklen^{\rm pilot}$ information symbols, denoted $\myVec{s}^{\rm info}[t]$, that contain the digital message.

To accommodate complex and non-linear channel models, we represent the channel mapping by a generic memoryless conditional distribution. Accordingly, the corresponding received signal vector $\myVec{y}_i[t] \in \mathbb{C}^N$ is determined as
\begin{equation}
\label{eq:general_channel_mapping}
\myVec{y}_i[t] \sim P_{\myVec{H}[t]}(\myVec{y}_i[t]|\myVec{s}_i[t]),
\end{equation}
which is subject to the unknown conditional  distribution $P_{\myVec{H}[t]}(\cdot|\cdot)$  that depends on the current channel parameters $\myVec{H}[t]$. We denote the corresponding observations of $\myVec{s}^{\rm pilot}[t]$ as $\myVec{y}^{\rm pilot}[t]$, and of $\myVec{s}^{\rm info}[t]$ as $\myVec{y}^{\rm info}[t]$.

The received channel outputs in \eqref{eq:general_channel_mapping} are processed by the receiver. We denote the mapping function of the receiver applied to any of the symbols during the $t$th block as $\mathcal{F}[t]:\mathbb{C}^N \rightarrow \mySet{S}^{K[t]}$, whose goal is to recover correctly each digital symbol in $\myVec{s}^{\rm info}_i[t]$ from $\myVec{y}^{\rm info}_i[t]$. The number of users in each block, $K[t]$, is assumed to be known to the receiver.


\subsection{Problem Formulation}
\label{subsec:problem}
Let us denote the estimated message for all $K[t]$ users in a given block index $t$ as $\hat{\myVec{s}}_i^{\rm info}[t]:= \mathcal{F}(\myVec{y}_i^{\rm info}[t])$, and the average computational complexity associated with the estimation of the $\Blklen^{\rm info}$ symbols as $C_{\rm avg}$. Our objective is to minimize the \ac{ser} over the information symbols, while reducing the computational overhead. Over a horizon of $T$ blocks, this is formulated as
\begin{align*}
\label{eqn:Problem} 
    	&\min \frac{1}{T\cdot B^{\rm info}}\sum_{t=1}^T\sum_{i=B^{\rm pilot}+1}^{B^{\rm tran}} \Pr\left( \hat{\myVec{s}}^{\rm info}_i[t] \neq  \myVec{s}^{\rm info}_i[t] \right),
\end{align*}
while keeping $C_{\rm avg}$ as low as possible. 

\subsection{Deep Receivers}
\label{subsec:DeepRx}
We focus on \ac{mimo} receivers implemented using \acp{dnn}, i.e.,  deep receivers. Such receivers are parameterized at time $t$ by the weights vector $\WeightsSet[t]$, and can be trained to operate in complex and unknown channel models as in \eqref{eq:general_channel_mapping}. Accordingly, we write the receiver processing as 
\begin{equation}
\label{eqn:RxMapping}
\mathcal{F}_{\WeightsSet[t]}[t]:\mathbb{C}^N \rightarrow \mySet{S}^{K[t]}, 
\end{equation}
and consider soft-output receivers that output a conditional  distribution over $\mySet{S}^{K[t]}$ denoted $P_{\WeightsSet[t]}$, such that
\begin{equation}
    \mathcal{F}_{\WeightsSet[t]}[t]\left(\myVec{y}\right) = \mathop{\arg \max}_{\myVec{s} \in \mySet{S}^{K[t]}} P_{\WeightsSet[t]}\left(\myVec{s} | \myVec{y}\right). 
\end{equation}
The complexity of applying \eqref{eqn:RxMapping} typically scales linearly with the number of parameters in  $\WeightsSet[t]$~\cite[Ch. 11]{zadeh2015cme}.

\smallskip
{\bf Modular Deep Receivers:}
As 
the number of users can change in time, we consider {\em modular} deep receivers, which are dividable into sub-modules that can be associated with specific users.  Modular architectures  arise when designing deep receivers via model-based deep learning methodologies~\cite{shlezinger2022model}, such as deep unfolding~\cite{balatsoukas2019deep}. One such architecture, used as our main  example, is DeepSIC proposed in~\cite{shlezinger2019deepsic}.

A modular deep \ac{mimo} receiver supporting $K$ users has parameters $\WeightsSet[t]$ that can be divided as $\WeightsSet[t]= \{\Weights^{(k)}[t]\}_{k=1}^K$. Each module $\Weights^{(k)}[t]$ produces the estimate of the symbol of the $k$th user. For instance, DeepSIC is based on soft interference cancellation \ac{mimo} detection, and operates in $Q$ iterations. In each iteration of index $q$, it produces $K$ probability mass functions over $\mySet{S}$, which at the $i$th symbol of the $t$th block is denoted $\{\hat{\myVec{p}}_i^{(k,q)}[t]\}$. These probablity vectors are obtained as soft estimates produced by the \ac{dnn} $\Weights^{(k)}[t]$ applied to the channel output $\myVec{y}_i[t]$ as well as the previous estimates of the interfering symbols, namely,
\begin{equation}
    \label{eq:soft_freq_out}
   \!\! \hat{\myVec{p}}^{(k,q)}_i[t] \!= \!\Big\{P_{\Weights^{(k)}[t]}(\myVec{s}_i^{(k)}[t]|\myVec{y}_i[t], \{ \hat{\myVec{p}}^{(l,q-1)}_i[t] \}_{l\neq k}  )\Big\}_{\myVec{s}\in\mathcal{S}}.
\end{equation}
The output of the receiver are  the soft estimates of the $Q$th iteration, and the estimated conditional distribution is
\begin{equation*}
   P_{\WeightsSet[t]}\left(\myVec{s}_i[t] | \myVec{y}[t]\right) \!=\! \prod_{k=1}^{K[t]} P_{\Weights^{(k)}[t]}\left(\myVec{s}_i^{(k)}[t]|\myVec{y}_i[t], \{ \hat{\myVec{p}}^{(l,Q-1)}_i[t] \}_{l\neq k}  \right).
\end{equation*}


\smallskip
{\bf Training Deep Receivers:} 
Deep receivers, being machine learning models, rely on data to learn the mapping in \eqref{eqn:RxMapping}. Two main learning paradigms are considered in this context: $(i)$ {\em joint learning} and $(ii)$ {\em online learning}. 

{\em Joint training} trains offline using  data corresponding different channel realizations from $\mathcal{H}$.
The \ac{dnn} parameters are static, i.e., they do not change in $t$, and thus must be tuned for a specific network configuration, namely, for a given $K$. 
Training a deep receiver for  $K$ users is done using a data set comprised of channel inputs and outputs corresponding to such networks, given by $\mathcal{D}^{(K)}_{\rm train}=\{(\myVec{y}_i^{\rm train},\myVec{s}^{\rm train}_i)\}_i$. Training the parameters set $\WeightsSet^{(K)}$, dictating the receiver mappings, is done by minimizing the cross-entropy loss, i.e., 
\begin{equation}
\label{eq:joint_training}
    \WeightsSet^{(K)}_{\rm joint} = \arg \min_{\WeightsSet^{(K)}} 
    \mySet{L}_{\rm CE}(\WeightsSet^{(K)},\mathcal{D}^{(K)}_{\rm train}),
\end{equation}
where the cross-entropy loss is defined as
\begin{equation}
\label{eq:cross_entropy_loss}
    \mySet{L}_{\rm CE}(\WeightsSet,\mathcal{D}) = -\frac{1}{|\mySet{D}|}\sum_{(\myVec{y}_i,\myVec{s}_i)\in\mathcal{D}}\log P_{\WeightsSet[t]}\left(\myVec{s}_i | \myVec{y}_i\right).
\end{equation}
Since different architectures are required for different number of users, one has to obtain  parameters sets for each  $K\in[1, K^{\rm max}]$, namely, pre-train and maintain $K_{\max}$  \acp{dnn}. 


{\em  Online training} updates the deep receiver parameters using the pilots  $\mathcal{D}[t]$ on each block $t$. Training again follows  the  cross-entropy loss, but based on the current pilots, namely 
\begin{equation}
\label{eq:online_training}
    \WeightsSet_{\rm online}^{(K[t])}[t] = \arg \min_{\WeightsSet} 
    \mySet{L}_{\rm CE}(\Weights^{(k)},\mathcal{D}[t]).
\end{equation}
Unlike joint learning \eqref{eq:joint_training}, online learning seeks parameters that are suitable for the current channel, using data acquired online. As pilot data is often limited and learning must be done rapidly, one can potentially enhance \ac{sgd} based learning using a principled starting point obtained from previous weights or meta-learned from past channels~\cite{raviv2022online}, requiring  the number of users to be static.


\begin{figure*}
	\centering
	\includegraphics[width = 0.9\textwidth]{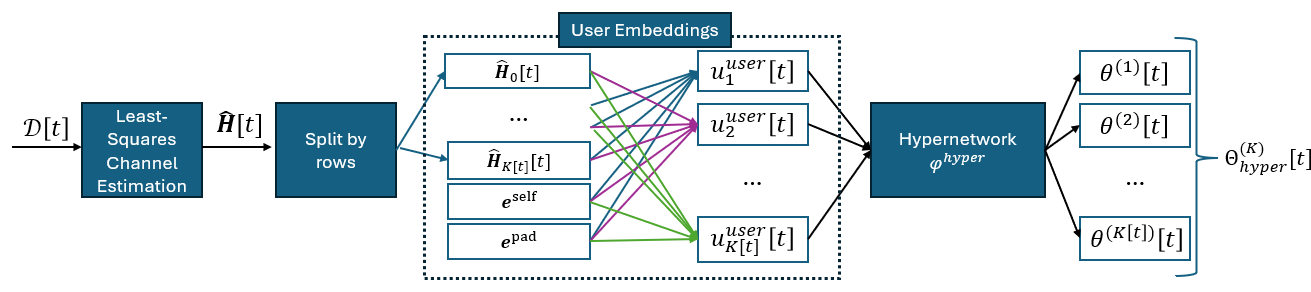}
	\caption{The weights-generation pipeline of modular hypernetworks.}
	\label{fig:hypernetwork}
 \vspace{-0.5cm}
\end{figure*}

\section{Hypernetwork Adaptation of Modular Deep Receivers}
\label{sec:method}
This section presents modular hypernetworks,   combining the adaptivity and flexibility of online learning with the offline training  of joint learning.
We present our hypernetwork framework in Subsection~\ref{subsec:method}, analyze its  online complexity in Subsection~\ref{subsec:complexity}, and provide a discussion in Subsection~\ref{ssec:discussion}.

\subsection{Modular Hypernetworks}
\label{subsec:method}

Our hypernetwork is designed to generate the weights for the deep receiver, accommodating any number of users. As we focus on modular deep receivers, the same hypernetwork can independently generate $\Weights^{(k)}$ for each user $k$, taking into account the channel conditions of the other users. That is, we cast that the weights for each user as a function of the context of the $k$th user, which includes the channel conditions of all other users $\ell\neq k, 1\leq \ell \leq K^{\rm max}$. The resulting procedure, illustrated in Fig.~\ref{fig:hypernetwork}, is comprised of {\em user embeddings}; a {\em hypernetwork} that maps these embeddings into user-wise \ac{dnn} modules; and a dedicated offline training procedure.

\subsubsection{User Embedding}
\label{subsubsec:user_embedder}
Hypernetworks require features of fixed dimensions that are informative of the current context~\cite{ha2016hypernetworks}. Intuitively, the channel parameters $\myVec{H}[t]$ can provide such information. However, as the dimensions of $\myVec{H}[t]$ depend on $K$ which  can vary between blocks, one must seek an alternative formulation.

Accounting for the prospective informativeness of $\myVec{H}[t]$, we construct our user embedding features by first recovering a linear least-squares estimate of the parameters, via  
\begin{equation}
\label{eq:channel_least_squares}
    \hat{\myVec{H}}[t] = \left(\left(\myVec{s}^{\rm pilot}[t]\right)^H   \myVec{s}^{\rm pilot}[t]\right)^{-1} \left(\myVec{s}^{\rm pilot}[t]\right)^H \myVec{y}^{\rm pilot}[t],
\end{equation}
with $(\cdot)^H$ denoting conjugate transpose. While we do not restrict our attention to linear channels, \eqref{eq:channel_least_squares} constitutes a rough first-order estimate of the channel parameters.



The estimate in \eqref{eq:channel_least_squares} is used to embed the physical conditions that each user experiences into a fixed size vector. To that end, we compose the context matrix of each user index $\ell$ on the $k$th user. We differentiate between the active user $\ell=k$, interfering user $\ell\neq k$ and non-existing user $K[t] < \ell \leq K^{\rm max}$, constructing the embedding 
\begin{equation}
\label{eq:user_embedding}
    \boldsymbol{u}_{k}^{(\ell)}[t] = \begin{cases}
      \hat{\myVec{H}}_\ell[t] & \text{if $\ell\neq k$ and $\ell \leq K[t]$}\\
      \myVec{e}^{\rm self} & \text{if $\ell = k$}\\
      \myVec{e}^{\rm pad} & \text{if $K[t]< \ell \leq K^{\rm max}$},
    \end{cases}
\end{equation}
where $\myVec{e}^{\rm self}, \myVec{e}^{\rm pad}$ are $N\times 1$ trainable  vectors corresponding to the user itself and to   non-existing users, respectively. These vectors  are shared for all users' embeddings.

Based on \eqref{eq:user_embedding}, we compose the $N\cdot K^{\rm max} \times 1$ context vector of the $k$th user, denoted $\boldsymbol{u}^{\rm user}_k[t]$,  as the concatenation of 
\begin{equation}
 \boldsymbol{u}^{\rm user}_k[t] = (\boldsymbol{u}_{k}^{(1)}[t] \mathbin\Vert \boldsymbol{u}_{k}^{(2)}[t] \mathbin\Vert \ldots \mathbin\Vert\boldsymbol{u}_{k}^{(K^{\rm max})}[t]),   
 \label{eq:user_embedding_cont}
\end{equation}
with $\mathbin\Vert$ being the concatenation operation. The size of \eqref{eq:user_embedding_cont}  is fixed regardless of the number of instantaneous users $K[t]$.

\subsubsection{Hypernetwork Adaptation}
\label{subsubsec:hypernetwork}
The embeddings in \eqref{eq:user_embedding_cont} are used to generate the parameters of the deep receivers modules associated with each user. 
The hypernetwork is an additional \ac{dnn} with trainable parameters are denoted by $\Network^{\rm hyper}$, whose mapping $\mySet{G}_{\Network^{\rm hyper}}$ transforms each user embedding into its module weights via
\begin{equation}
\label{eqn:hypernetwork}
    \Weights^{(k)}[t] = \mySet{G}_{\Network^{\rm hyper}}\left(\myVec{u}^{\rm user}_k[t]\right).
\end{equation}
The resulting procedure is summarized as Algorithm~\ref{alg:hypernetwork_deepsic}.

  \begin{algorithm} 
    \caption{Modular Hypernetwork Adaptation}
    \label{alg:hypernetwork_deepsic}
    \SetAlgoLined
    \SetKwInOut{Input}{Input}
        \Input{ Pilots set $\mathcal{D}[t]= \{\myVec{s}^{\rm pilot}[t], \myVec{y}^{\rm pilot}[t]\}$; \\
        Information channel output $\myVec{y}^{\rm info}[t]$;\\
        Number of users $K[t]$.} 
    {
    
    Estimate $\hat{\myVec{H}}[t]$ by \eqref{eq:channel_least_squares};

    \For{$k\in\{1,...,K[t]\}$}{
        \For{$\ell\in\{1,...,K^{\rm max}\}$}{ 
                 Calculate $\boldsymbol{u}_{k}^{(\ell)}[t]$ by \eqref{eq:user_embedding};
            }          
                 Concatenate $\boldsymbol{u}^{\rm user}_k[t]$ using \eqref{eq:user_embedding_cont};

                 Get parameters $\Weights^{(k)}[t]$ from $\Network^{\rm hyper}$ via \eqref{eqn:hypernetwork};
    }          

    Set deep receiver $\WeightsSet^{(K[t])}_{\rm hyper}[t] \leftarrow \{\Weights^{(k)}[t]\}_{K[t]}$\\
    \KwRet{ $\hat{\myVec{s}}^{\rm info}[t] = \mySet{F}_{\WeightsSet^{(K[t])}_{\rm hyper}[t]}\left(\myVec{y}^{\rm pilot}[t]\right)$  } \label{line:return}
  }
  \end{algorithm}

We note that the number of weights in each module can generally depend on the number of users $K[t]$. For instance, in DeepSIC each module accounts for the $K[t]-1$ interfering symbols as in \eqref{eq:soft_freq_out}. As the hypernetworks is a \ac{dnn} with a fixed number of output neurons, we set its output size to be  $|\Weights^{(K^{\rm max})}|$. Then, for each $k$th user, we calculate the inferred parameters via \eqref{eqn:hypernetwork}, by taking the $|\Weights^{(K[t])}|$ first outputs as the parameters $\Weights^{(k)}[t]$ of the $k$th user in the $t$th block. We run the  hypernetwork for each context vector $\boldsymbol{u}^{\rm user}_k[t]$, yielding the entire parameters set $\WeightsSet^{(K[t])}_{\rm hyper}[t] = \{\Weights^{(k)}[t]\}_{k=1}^{K[t]}$. 

\subsubsection{Hypernetwork Training}
\label{subsubsec:training_hypernetwork}
The trainable parameters of the  modular hypernetwork are the \ac{dnn} weights $\Network^{\rm hyper}$, and the embedding vectors $\myVec{e}^{\rm self},\myVec{e}^{\rm pad}$. They are designed to be trained offline, as in joint training, i.e., using  the datasets  $\{\mathcal{D}_{\rm train}^{(K)}\}_{K=2}^{K^{\rm max}}$, comprised of multiple  input-output blocks for different values of $K$.

Specifically, for each channel input-output block observed during training, the forward pass follows Algorithm~\ref{alg:hypernetwork_deepsic}. The channel parameters are first approximated  using least-squares estimation \eqref{eq:channel_least_squares}, the user embedding are generated  via \eqref{eq:user_embedding}, and thereafter the parameters $\WeightsSet_{\rm hyper}^{(K)}[t]$  are obtained.
We use conventional deep learning based on \ac{sgd}-based learning, while computing the gradients by  backpropagation through the deep receiver, the hypernetwork \ac{dnn}, and trainable embeddings. The loss that guides the training procedure is the cross-entropy~\eqref{eq:cross_entropy_loss}, namely, it seeks to approach  
\begin{equation*}
    \argmin_{\Network^{\rm hyper},\myVec{e}^{\rm self},\myVec{e}^{\rm pad}} \sum_{K=2}^{K^{\rm max}}\mySet{L}_{\rm CE}\left(\{\mySet{G}_{\Network^{\rm hyper}}\left(\myVec{u}^{\rm user}_k[t]\right)\}_{k=1}^{K},\mathcal{D}^{(K)}_{\rm train}\right),
\end{equation*} 
where the dependence on the embedding vectors $\myVec{e}^{\rm self},\myVec{e}^{\rm pad}$ is encapsulated in the embedding vectors via \eqref{eq:user_embedding}.

\subsection{Complexity Analysis}
\label{subsec:complexity}

Given the considerable resource and latency expenditure associated with re-training, one of the aims of our modular hypernetworks is to support adaptation for the instantaneous channel without the computational burden of online training. To quantify this gain, we analyze the average per-block computational complexity of online learning as compared to our modular hypernetwork approach. 

In our analysis we introduce the following symbols:
\begin{itemize}
    \item {\em Training complexity} $\kappa_{\rm T}(\Weights)$, representing the computational effort in training a \ac{dnn} with parameters $\Weights$ using, e.g., conventional \ac{sgd}-based training.
    \item {\em Inference complexity} $\kappa_{\rm I}(\Weights)$, representing the computational effort of running an inference once through a neural network with parameters $\Weights$. 
\end{itemize}
For \acp{dnn}, both training and inference complexity scale linearly with the number of weights and the data size~\cite[Ch. 11]{zadeh2015cme}. Thus, for a block comprised $B^{\rm pilot}$ pilots and $B^{\rm info}$ information symbols,   we write  
$\kappa_{\rm T}(\Weights) = \alpha_{\rm T}|\Weights|B^{\rm pilot}$ and $\kappa_{\rm I}(\Weights) = \alpha_{\rm I}|\Weights|B^{\rm info}$ for some $\alpha_{\rm T}, \alpha_{\rm I} > 0$ that represent the effort of processing a single symbol using a single parameter in training or inference, respectively. Note that as training involves numerous forward passes and gradient updates, it holds that $\alpha_{\rm T} \gg \alpha_{\rm I}$. 
For simplicity (and also corresponding to the common practice in unfolded architectures \cite{shlezinger2019deepsic, balatsoukas2019deep}), we assume that each module has the same number of parameters, such that $|\Weights^{(k)}| = \frac{1}{K}|\WeightsSet^{(K)}|$ for each $k$. 

Using the above notations, we can characterize the computational savings in terms of average per-block inference complexity of modular hypernetworks compared to online learning, as stated in the following proposition
\begin{proposition}
	\label{pro:online_complexity}
 Consider the transmission of scalar symbols $\mySet{S}\subset \mathbb{C}$ with pilots holding $B^{\rm pilot} > K^{\max}$. Then, the ratio in the average per-block complexity of online learning and modular hypernetwork adaptation when using a modular architecture with $|\Weights|$ parameters per each module with $K$ users satisfies
    \begin{align}
\frac{C_{\rm avg}^{\rm Hyper}}{C_{\rm avg}^{\rm Online}} =& 
\frac{\alpha_{\rm I}(|\Weights|B^{\rm info} +|\Network^{\rm hyper}|) + \mySet{O}(NB^{\rm pilot})}{(\alpha_{\rm T}B^{\rm pilot}+\alpha_{\rm I}B^{\rm info} )|\Weights|}.
	\label{eq:online_complexity}
\end{align}
\end{proposition}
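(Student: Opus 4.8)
The plan is to tally the per-block operation counts of the two schemes separately under the stated linear-scaling model $\kappa_{\rm T}(\Weights)=\alpha_{\rm T}|\Weights|B^{\rm pilot}$ and $\kappa_{\rm I}(\Weights)=\alpha_{\rm I}|\Weights|B^{\rm info}$ together with the equal-module-size assumption $|\Weights^{(k)}|=|\Weights|$, and then to form their ratio. First I would account for online learning: on each block it retrains all $K$ modules on the $B^{\rm pilot}$ pilots and then runs inference over the $B^{\rm info}$ information symbols, giving
$C_{\rm avg}^{\rm Online}=\sum_{k=1}^{K}\big(\kappa_{\rm T}(\Weights^{(k)})+\kappa_{\rm I}(\Weights^{(k)})\big)=K\,(\alpha_{\rm T}B^{\rm pilot}+\alpha_{\rm I}B^{\rm info})|\Weights|$,
which already supplies the denominator of \eqref{eq:online_complexity} up to the common factor $K$.

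Next I would tally the hypernetwork pipeline of Algorithm~\ref{alg:hypernetwork_deepsic}, which performs no on-device training and instead consists of three stages: (i) the least-squares channel estimate \eqref{eq:channel_least_squares}; (ii) $K$ forward passes of the hypernetwork \eqref{eqn:hypernetwork}, one per user, each mapping a single context vector to a weight set; and (iii) inference of the generated deep receiver over the $B^{\rm info}$ symbols. Stage (iii) costs $\sum_{k}\kappa_{\rm I}(\Weights^{(k)})=K\alpha_{\rm I}|\Weights|B^{\rm info}$. For stage (ii), the key observation is that each hypernetwork call processes a single input (data size one), so its cost is $\alpha_{\rm I}|\Network^{\rm hyper}|$ and over the $K$ users it is $K\alpha_{\rm I}|\Network^{\rm hyper}|$, independent of the output truncation to $|\Weights^{(K[t])}|$. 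Summing (ii) and (iii) yields $K\alpha_{\rm I}\big(|\Weights|B^{\rm info}+|\Network^{\rm hyper}|\big)$, which reproduces the leading numerator term after the $K$ cancellation.

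The remaining and most delicate step is bounding the cost of stage (i), and it is here that the hypothesis $B^{\rm pilot}>K^{\max}$ enters. For scalar symbols $\mySet{S}\subset\mathbb{C}$ the matrix $\myVec{s}^{\rm pilot}[t]$ is $B^{\rm pilot}\times K$ and $\myVec{y}^{\rm pilot}[t]$ is $B^{\rm pilot}\times N$, so forming $(\myVec{s}^{\rm pilot}[t])^H\myVec{s}^{\rm pilot}[t]$ costs $\mySet{O}(K^2B^{\rm pilot})$, its inversion $\mySet{O}(K^3)$, and the products $(\myVec{s}^{\rm pilot}[t])^H\myVec{y}^{\rm pilot}[t]$ and the final multiplication $\mySet{O}(KNB^{\rm pilot})$. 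Using $B^{\rm pilot}>K^{\max}\geq K$ together with $N\geq K^{\max}$, the $\mySet{O}(KNB^{\rm pilot})$ term dominates, so stage (i) is $\mySet{O}(KNB^{\rm pilot})$. Collecting the three stages gives $C_{\rm avg}^{\rm Hyper}=K\alpha_{\rm I}\big(|\Weights|B^{\rm info}+|\Network^{\rm hyper}|\big)+\mySet{O}(KNB^{\rm pilot})$, and dividing by $C_{\rm avg}^{\rm Online}$ cancels the common factor $K$ and reduces $\mySet{O}(KNB^{\rm pilot})/K$ to $\mySet{O}(NB^{\rm pilot})$, yielding \eqref{eq:online_complexity}. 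I expect the main obstacle to be bookkeeping rather than algebra: correctly attributing data-size one to the weight-generation passes, and establishing that the estimation overhead is subleading on a per-user basis under the stated assumptions on $B^{\rm pilot}$ and $N$.
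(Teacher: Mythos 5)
Your proposal is correct and follows essentially the same route as the paper: tally $C_{\rm avg}^{\rm Online}=K(\alpha_{\rm T}B^{\rm pilot}+\alpha_{\rm I}B^{\rm info})|\Weights|$, tally the hypernetwork pipeline as $K$ inference passes plus $K$ weight-generation passes plus the least-squares estimate at $\mySet{O}(KNB^{\rm pilot})$, and cancel the common factor $K$ in the ratio. Your explicit breakdown of the least-squares cost (Gram matrix, inversion, and cross-products, with $B^{\rm pilot}>K^{\max}$ and $N\geq K^{\max}$ ensuring the $\mySet{O}(KNB^{\rm pilot})$ term dominates) is slightly more detailed than the paper's one-line justification, but the argument is the same.
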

\begin{proof}
    On each block, online learning involves re-training $K$ modules with $|\Weights|$, and inference using a \ac{dnn} with $|\Weights|K$ parameters, hence
        $C_{\rm avg}^{\rm Online} = (\alpha_{\rm T}B^{\rm pilot}+\alpha_{\rm I}B^{\rm info})|\Weights|K$.
    
    The modular hypernetwork infers with the same architecture (at complexity $\alpha_{\rm I}B^{\rm info}|\Weights|K$) without any online training. Instead, it uses  $K$ hypernetwork runs, each with complexity $ \alpha_{\rm I}|\Network^{\rm hyper}|$, and the least squares estimate \eqref{eq:channel_least_squares} from $B^{\rm pilot}$ measurements, at complexity  of $\mySet{O}(NB^{\rm pilot}K)$ as $N\geq K$. 
    Taking the ratio proves the proposition.
\end{proof}



The characterization of the excessive complexity of the considered forms of adaption in Proposition~\ref{pro:online_complexity} accommodates all computations carried out online. One can obtain a more concise (yet faithful) approximation of the complexity-per-block savings assuming the following expected properties: 
\begin{enumerate}[label={P\arabic*}]
    \item \label{itm:traingComp} Training on the pilots is notably more computationally intensive than detection, i.e., $\alpha_{\rm T}B^{\rm pilot} \gg \alpha_{\rm I}B^{\rm info}$.
    \item \label{itm:LSComp} Computing the linear least-squares features \eqref{eq:channel_least_squares} is substantially less complex compared to the subsequent application of the hypernetwork and the \ac{dnn}.
\end{enumerate}
When \ref{itm:traingComp}-\ref{itm:LSComp} hold, the complexity ratio in Proposition~\ref{pro:online_complexity} simplifies to the following corollary:
\begin{corollary}
    \label{cor:computation}
    Under \ref{itm:traingComp}-\ref{itm:LSComp}, Proposition~\ref{pro:online_complexity} implies that
    \begin{equation*}
        \frac{C_{\rm avg}^{\rm Hyper}}{C_{\rm avg}^{\rm Online}} 
        \approx 
         \frac{\alpha_{\rm I}B^{\rm info}  }{\alpha_{\rm T}B^{\rm pilot}} \left(1 + \frac{|\Network^{\rm hyper}|}{|\Weights|B^{\rm info} } \right).
    \end{equation*}
\end{corollary}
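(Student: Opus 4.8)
The plan is to derive the stated approximation directly from the exact ratio already established in Proposition~\ref{pro:online_complexity}, treating each of the two hypotheses \ref{itm:traingComp} and \ref{itm:LSComp} as a license to discard one sub-dominant additive term. No fresh accounting of the per-block costs is required, since $C_{\rm avg}^{\rm Hyper}$ and $C_{\rm avg}^{\rm Online}$ are fixed by the proposition; the corollary is purely an asymptotic simplification of the given fraction, so I would state it with $\approx$ throughout.

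First I would address the numerator $\alpha_{\rm I}(|\Weights|B^{\rm info} + |\Network^{\rm hyper}|) + \mySet{O}(NB^{\rm pilot})$ using \ref{itm:LSComp}. That assumption asserts precisely that the least-squares feature cost, carried by the $\mySet{O}(NB^{\rm pilot})$ term, is dominated by the cost of running the hypernetwork and the deep receiver, carried by $\alpha_{\rm I}(|\Weights|B^{\rm info} + |\Network^{\rm hyper}|)$. Hence the numerator reduces to $\alpha_{\rm I}(|\Weights|B^{\rm info} + |\Network^{\rm hyper}|)$ up to a lower-order correction. Next I would treat the denominator $(\alpha_{\rm T}B^{\rm pilot} + \alpha_{\rm I}B^{\rm info})|\Weights|$ using \ref{itm:traingComp}: since $\alpha_{\rm T}B^{\rm pilot} \gg \alpha_{\rm I}B^{\rm info}$, the detection contribution $\alpha_{\rm I}B^{\rm info}$ is negligible against the training contribution $\alpha_{\rm T}B^{\rm pilot}$, so the denominator collapses to $\alpha_{\rm T}B^{\rm pilot}|\Weights|$. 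Substituting both simplifications yields the intermediate expression $\alpha_{\rm I}(|\Weights|B^{\rm info} + |\Network^{\rm hyper}|)/(\alpha_{\rm T}B^{\rm pilot}|\Weights|)$.

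The closing step is algebraic: I would factor $|\Weights|B^{\rm info}$ out of the simplified numerator, cancel the common factor $|\Weights|$ against the denominator, and regroup to reach the product form claimed in the corollary.

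I do not expect a genuine obstacle, as the manipulation is a first-order expansion in two small ratios. The only point that warrants care is to make explicit that \ref{itm:traingComp} and \ref{itm:LSComp} are exactly the two conditions that justify dropping the two sub-dominant terms, and that both discarded terms are \emph{additive} rather than multiplicative, so the resulting $\approx$ reflects an honest lower-order neglect rather than a hidden order-of-magnitude assertion. I would flag that the approximation is tight precisely in the regime $\alpha_{\rm T}B^{\rm pilot} \gg \alpha_{\rm I}B^{\rm info}$ and $NB^{\rm pilot} \ll |\Weights|B^{\rm info} + |\Network^{\rm hyper}|$, which is where \ref{itm:traingComp}--\ref{itm:LSComp} place us.
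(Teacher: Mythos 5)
Your derivation is correct and matches the (implicit) argument in the paper: use \ref{itm:LSComp} to drop the $\mySet{O}(NB^{\rm pilot})$ term in the numerator of \eqref{eq:online_complexity}, use \ref{itm:traingComp} to drop $\alpha_{\rm I}B^{\rm info}$ in the denominator, then factor out $|\Weights|B^{\rm info}$ and cancel $|\Weights|$. The paper states the corollary without a written proof, and your two-step additive-neglect argument is exactly the intended one.
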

Corollary~\ref{cor:computation} reveals that, when the (expected) properties \ref{itm:traingComp}-\ref{itm:LSComp} are satisfied, as long as the hypernetwork is not dramatically more complex compared to the modules it outputs, then $\frac{C_{\rm avg}^{\rm Hyper}}{C_{\rm avg}^{\rm Online}} \ll 1$. Namely, our proposed modular hypernetwork framework is likely to be notably less complex during each block compared to online training. 

\subsection{Discussion}
\label{ssec:discussion}
Both our proposed modular hypernetwork, as well as existing online learning and joint learning approaches, are strategies for handling the dynamic nature of wireless scenarios. In well-known and relatively static test channels, one can train the deep receiver offline and, as long as the training channel remains valid, reliably detect data transmitted over the test channel. However, if the observed channel is dynamic, such as under \ac{mimo} settings where users join and leave the network, each experiencing time-varying conditions, then continuous adaptation of both the architecture and weights of the deep receiver is necessary. Straightforward online training, though, demands a large number of labeled pilots and incurs significant latency and complexity overhead due to the need for training during each coherence duration.

Our proposed hypernetwork-based approach offers a middle ground between the joint and online methods. It presumes access to a dataset from closely related channels, although not identical to the one encountered during testing, and performs training offline. At test time, it requires only a small number of pilots and limited overhead to produce  weights  without any additional training. Thereby, it enables continuous adaptation of the architecture, which elastically matches the current network, as well the weights to match the channel, at lower computational costs than online training.

As modular hypernetworks rely on offline training, they share some of the limitations associated this learning paradigm. For instance, it requires for a large quantity of labeled data from channels similar to that observed on deployment. Being applied in settings that substantially differ from those observed in training  leads to performance degradation, as noted in Section~\ref{sec:Simulation}, although it still outperforms joint learning. Furthermore, our hypernetwork may struggle to scale effectively to scenarios involving a few hundred users due to the exponentially growing space of possible mappings, as dictated by the input and output sizes. This can be possibly tackled by utilizing hypernetworks that do not output the weights directly, but more compact correction terms, see, e.g., context modulation techniques~\cite{ni2024adaptive}. These extensions are left  for future research research.


\section{Numerical Evaluations}
\label{sec:Simulation}

\begin{figure}
    \centering
\includegraphics[width=.4\textwidth]{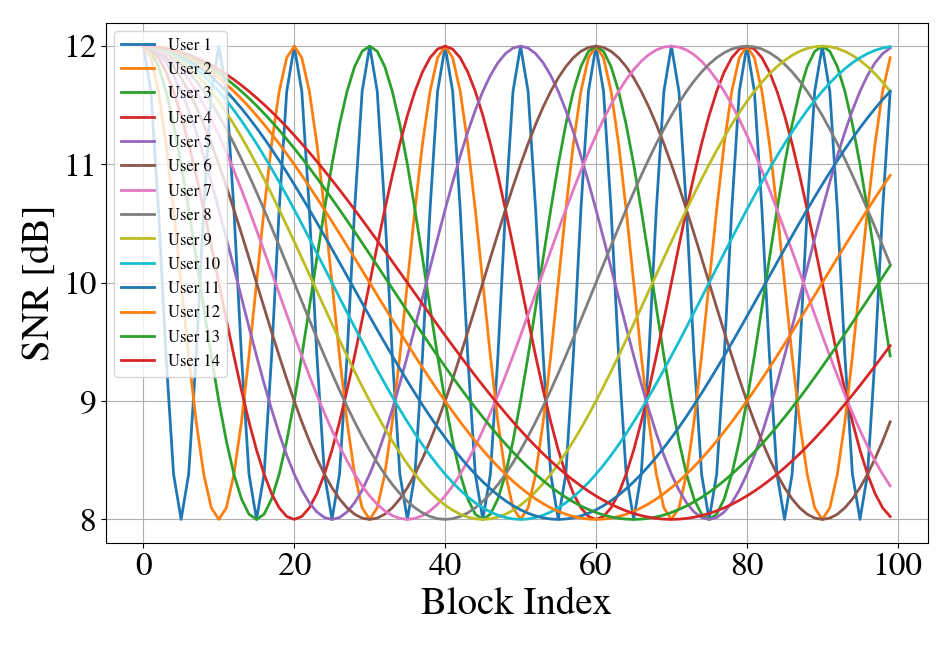}

    \caption{Block-varying \ac{snr} profiles, $K[t]=14$.}
    \label{fig:snr_profiles} 
\end{figure}

\subsection{Experimental Setup}
\label{subsec:compared_approaches}

{\bf Learning Methods:}
We compare the following  schemes: 
$(i)$ {\em Joint Learning}, that trains offline a set of weights for each different users configuration; 
$(ii)$ {\em Online Learning} from pilots;
and $(iii)$ our proposed {\em Modular Hypernetworks}. 
The learning rates are $1\cdot 10^{-3}$ for  joint and online learning, and  $5\cdot 10^{-4}$ for the hypernetwork.  The number of Adam iterations is set as $100$ for the joint and the online methods, and to $25$  for the hypernetwork. This implies that $\alpha_{\rm T} > 100 \alpha_{\rm I}$. 
These values were set empirically to ensure convergence\footnote{The source code used in our experiments is available at \hyperlink{https://github.com/tomerraviv95/adapting-detectors-using-hypernetworks}{https://github.com/tomerraviv95/adapting-detectors-using-hypernetworks}}.

\smallskip
 {\bf Architecture:} We compare the learning methods for training  the modular DeepSIC architecture  \cite{shlezinger2019deepsic}. We unroll $Q=3$ iterations  with user-specific \ac{dnn}-modules comprised of two \ac{fc} layers with sizes $(N + K[t] - 1) \times 16$ and $16 \times 2$ having ReLU activations in-between, thus $|\Weights| = 16(N + K[t] + 1) +18$ (including bias). These values were chosen as a trade-off between its expressiveness and performance. The hypernetwork is composed of two hidden layers of sizes $N\cdot K^{\rm max}\times64$, $64\times 32$ with intermediate ReLU activations, and  a linear output layer of size $16(N + K^{\rm max} - 1)$, corresponding to the parameters of a single   module. The number of hypernetwork parameters is thus 
 $|\Network^{\rm hyper}| = 64 \cdot (N\cdot K^{\rm max}+1) + 32 \cdot 65 + 33 \cdot(16(N + K^{\rm max} + 1) +18)$ (including bias).
\renewcommand{\arraystretch}{1.1} 
\begin{table}[t]
    \centering
      \begin{threeparttable}
        \caption{Runtime over $T=100$ blocks}
          \label{tab:runtimes}
            \begin{tabular}{| c | c |}
                \hline
                \textbf{Learning Method} & \textbf{Average Runtime}\\ 
                \hline
                Joint learning & 2 [s]\\ 
                \hline 
                Online learning & 300 [s] \\
                \hline
                Modular hypernetwork & 3.1 [s] \\
                \hline
            \end{tabular}
      \end{threeparttable}
      \vspace{-0.2cm}
\end{table}

\begin{figure}
    \centering
    \begin{subfigure}[b]{0.48\textwidth}
    \includegraphics[width=\figwidth,height=\figheight]{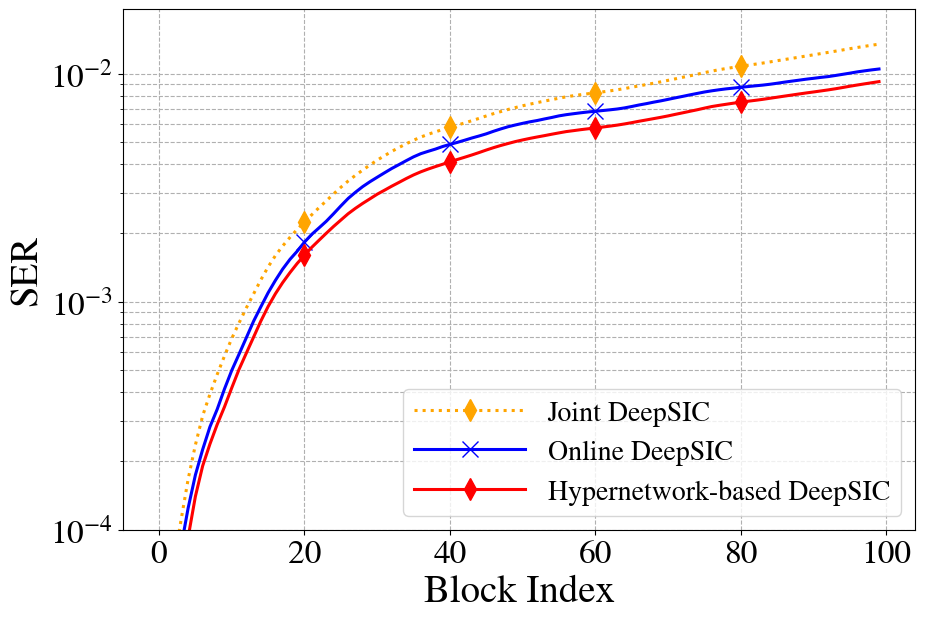}
    \caption{\ac{ser} vs. block index for $K[t]=8$.}
    \end{subfigure}
    \begin{subfigure}[b]{0.48\textwidth}
    \includegraphics[width=\figwidth,height=\figheight]{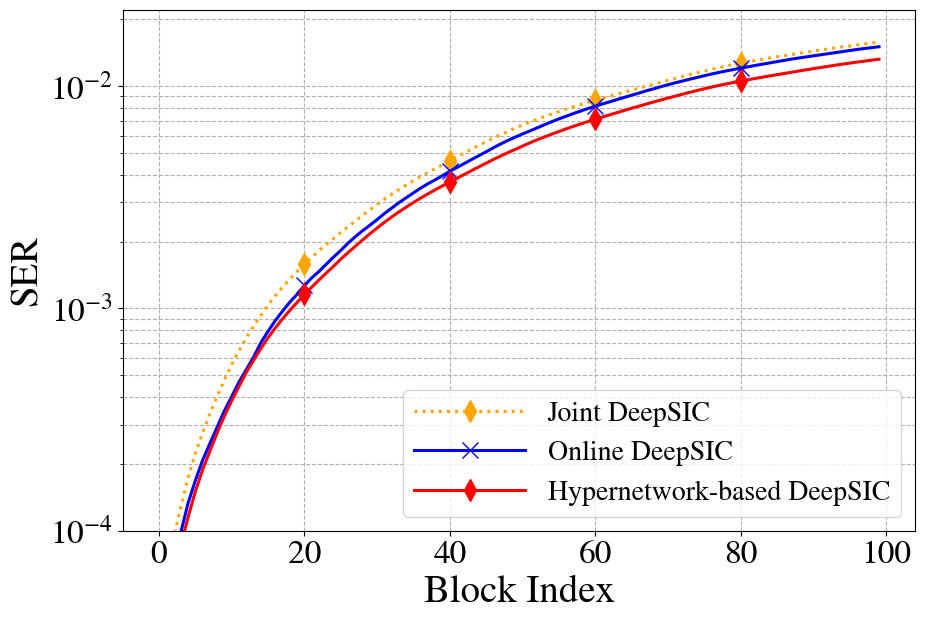}
    \caption{\ac{ser} vs. block index for $K[t]=14$.}
    \label{fig:synthetic_ser_20}
    \end{subfigure}
    \caption{Synthetic channel, time-invariant $K[t]$.}
    \label{fig:weights_synthetic_ser} 
\end{figure}

\smallskip
{\bf Channels:}
We employ a dynamic linear channel model for all evaluations. The input-output relationship of the considered memoryless Gaussian \ac{mimo} channel is 
\begin{equation}
\label{eqn:GaussianMIMO}
\myVec{y}_i[t] = \myVec{H}[t]\myVec{s}_i[t] + \myVec{w}_i[t],
\end{equation}
where $\myVec{H}[t]$ is a $N\times K[t]$ channel matrix, and $\myVec{w}_i[t]$ is white Gaussian noise. We consider two different settings of $\myVec{H}[t]$: 
$(i)$ {\em Synthetic} channels, where  $\myVec{H}[t]$ models spatial exponential received power decay with a different per-user \ac{snr}, and its entries are given by
$\left( \myVec{H}\right)_{n,k} = \sqrt{SNR_{k}} \cdot e^{-|n-k|}$, for each $n \in \{1,\ldots, N\}$ and $ k \in \{1,\ldots, K[t]\}$. The \acp{snr} profiles of each user are varying with the block index (but are constant at each block), as illustrated in Fig.~\ref{fig:snr_profiles}. 
$(ii)$ {\em COST2100} channel, generated from the geometry-based stochastic  model of \cite{liu2012cost}.  The  channel represents multiple users moving at speeds in $0-5$ [m/s] in a semi-urban environment with an operating frequency of $300$[MHz] with mixed urban and rural features. Semi-urban environments often have 
propagation characteristics that are more complex than in purely rural or urban environments, due to the varied types of obstacles and open areas. Succeeding on this scenario requires high adaptivity, since there is considerable variability in the channels observed in different blocks.

The symbols are generated  i.i.d.  from a \ac{bpsk} constellation. Unless stated otherwise, each  of the  $T=100$ blocks is composed of $\Blklen^{\rm pilot}= 800$ pilot symbols, and $\Blklen^{\rm info} = 15,200$ information symbols, i.e., $\Blklen^{\rm tran}= 16,000$ symbols. To obtain the weights for the joint and hypernetwork-based methods, we train offline with $100,000$ symbols per each  $K\in\{2,\ldots,K^{\rm max}\}$.

\begin{figure}
    \centering
    \begin{subfigure}[b]{0.48\textwidth}
    \includegraphics[width=\figwidth,height=\figheight]{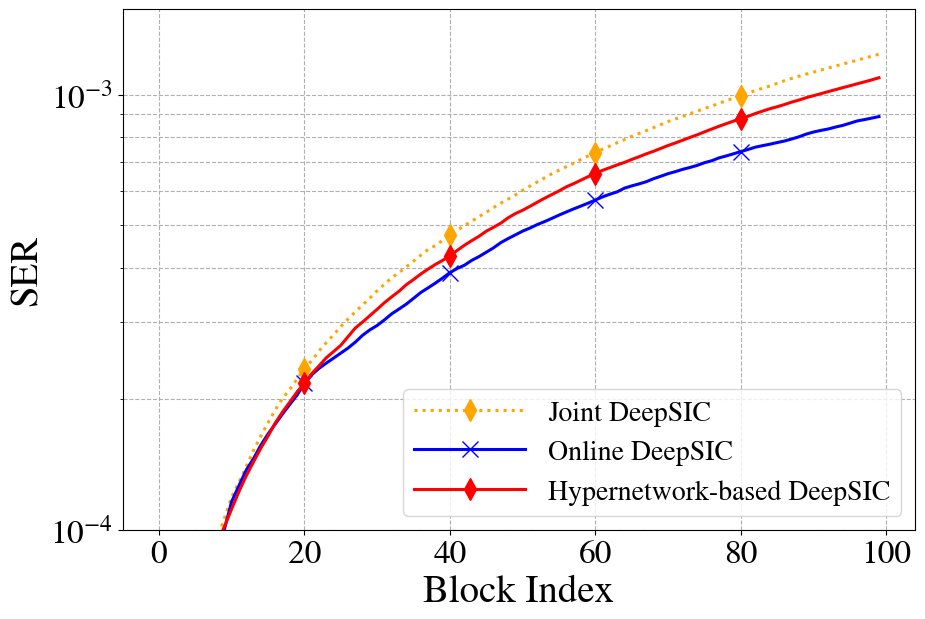}
    \caption{\ac{ser} vs. block index for $K[t]=8$.}
    \end{subfigure}
    \begin{subfigure}[b]{0.48\textwidth}
    \includegraphics[width=\figwidth,height=\figheight]{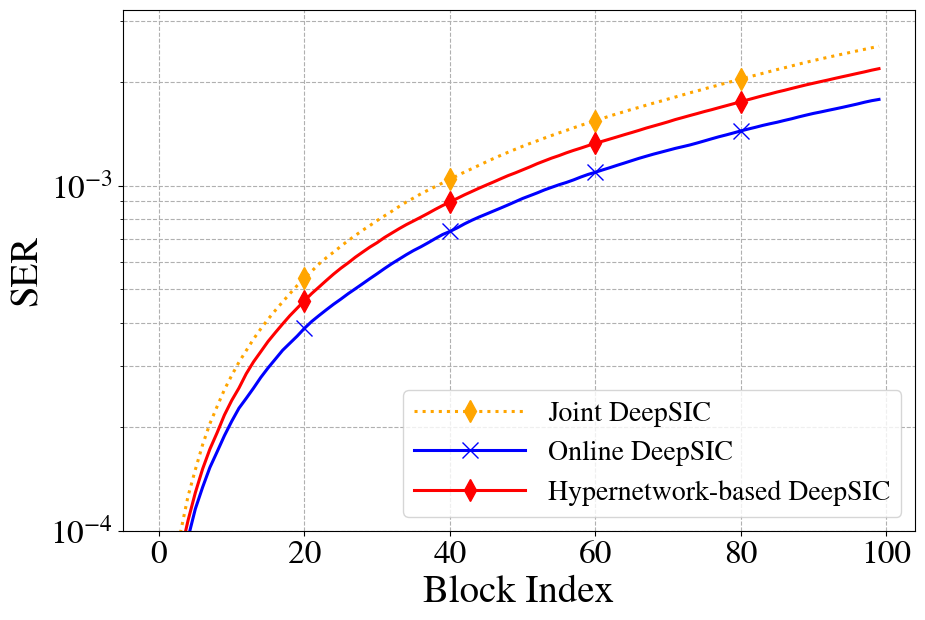}
    \caption{\ac{ser} vs. block index for $K[t]=14$.}
    \end{subfigure}
    \caption{COST 2100 channel, time-invariant $K[t]$.}
    \label{fig:cost_ser} 
\end{figure}

\subsection{Weights Only Adaptation}
\label{subsec:weights}

 We begin by evaluating \ac{ser} for  two configurations that only require weights adaptation, i.e., $K[t]$ remains fixed. In the first, the number of antennas is set to $N=K^{\rm max}=12$ and $K[t]=8$ and in the second $N=K^{\rm max}=20$ and $K[t]=14$.

{\bf Synthetic Channel:}
The aggregated \ac{ser} performance  is reported in Fig.~\ref{fig:weights_synthetic_ser}. We observe that joint learning fails to adapt the receiver for each given channel profile.  Online learning outperforms joint learning by a factor of $\times 2$ continuously, by adapting for each channel realization. Our modular hypernetwork is able to track, and even slightly surpasses online training, while reducing its complexity overhead. 
Specifically, by Corollary \ref{cor:computation}, the complexity saving here holds $\frac{C_{\rm avg}^{\rm Hyper}}{C_{\rm avg}^{\rm Online}}  < {0.2}$, for both configurations. These savings are  translated to even more substantial latency reduction, as reported in Table~\ref{tab:runtimes} (where all runtimes are evaluated on the same RTX3060 GPU). There, we note that the runtime of modular hypernetworks is comparable to  joint learning, and is over $100 \times$ faster than online training. 

\begin{figure}
    \centering
    \begin{subfigure}[b]{0.48\textwidth}
    \includegraphics[width=\figwidth,height=\figheight]{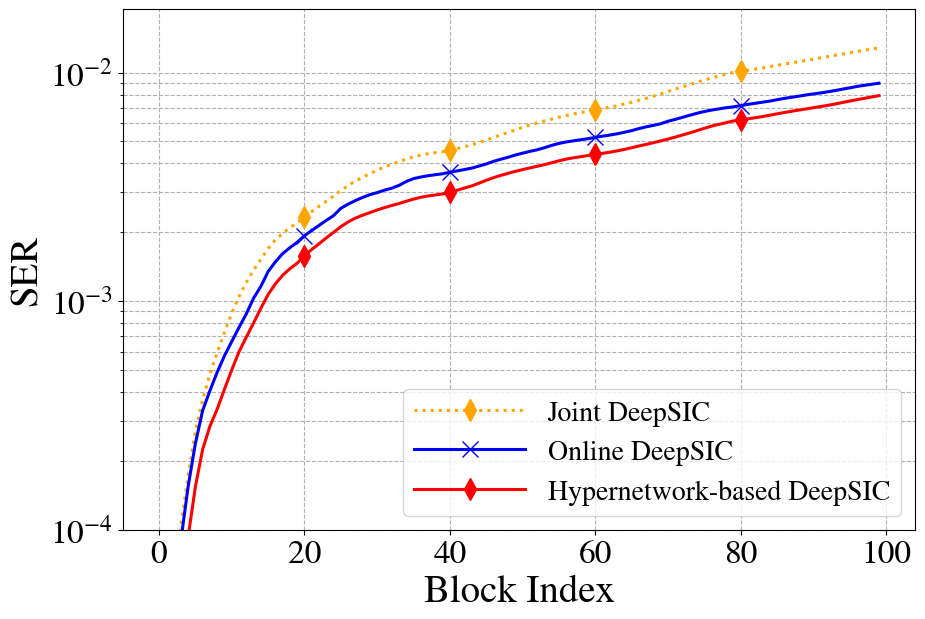}
    \caption{$K[t] \in \{4,5,6,7,8\}$.}
    \end{subfigure}
    \begin{subfigure}[b]{0.48\textwidth}
    \includegraphics[width=\figwidth,height=\figheight]{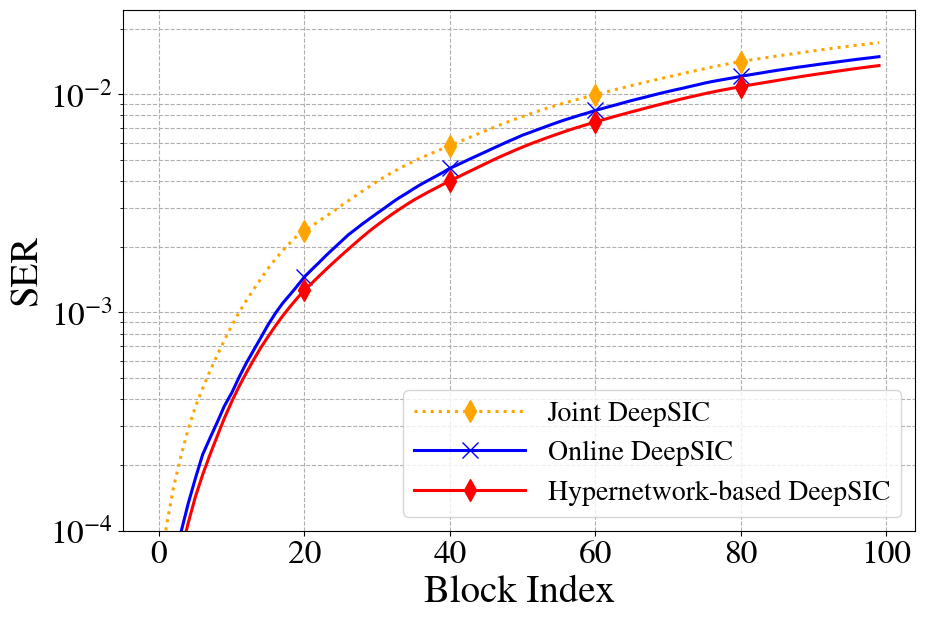}
    \caption{$K[t] \in \{14,15,16,17,18\}$.}
    \end{subfigure}
    \caption{Synthetic channel, time-varying $K[t]$.}
    \label{fig:arch_synthetic_ser} 
\end{figure}

{\bf COST2100 Channel:}
We next consider the  COST 2100 channel with  \ac{snr} of $12$ dB for all users. Fig.~\ref{fig:cost_ser} reports the average \ac{ser} vs. block index over $T=100$ blocks. 
The main conclusion highlighted above is confirmed in this more realistic setting. We systematically observe that modular hypernetworks strike a balance between  joint and online learning, while reducing the overall complexity. 


\subsection{Architecture \& Weights Adaptation}
\label{subsec:architecture_and_weights}

Next, we allow the number of users to randomly change across transmission. Specifically, in the first configuration the number of antennas is set to $N=12$ with $K[t]\in\{4,5,6,7,8\}$, and in the second $N=20$ with $K[t]\in\{14,15,16,17,18\}$. To accommodate different number of users, joint learning has to train offline on all users combinations, resulting in   weights for each $K[t]$, and increasing the memory footprint by an order of $K^{\rm max}$. In contrast, modular hypernetworks maintain a single \ac{dnn}  for a given $K^{\rm max}$.

{\bf Synthetic Channel:}
The average \ac{ser} for the synthetic channel model is depicted in Fig.~\ref{fig:arch_synthetic_ser}. We note that even when not only the weights, but also the architecture itself is required to vary, our modular hypernetwork can still closely match   online learning. For a high number of users, all methods seem to achieve a similar performance. These results are consistent with the ones in Fig.~\ref{fig:weights_synthetic_ser}.

\begin{figure}
    \centering
    \begin{subfigure}[b]{0.48\textwidth}
    \includegraphics[width=\figwidth,height=\figheight]{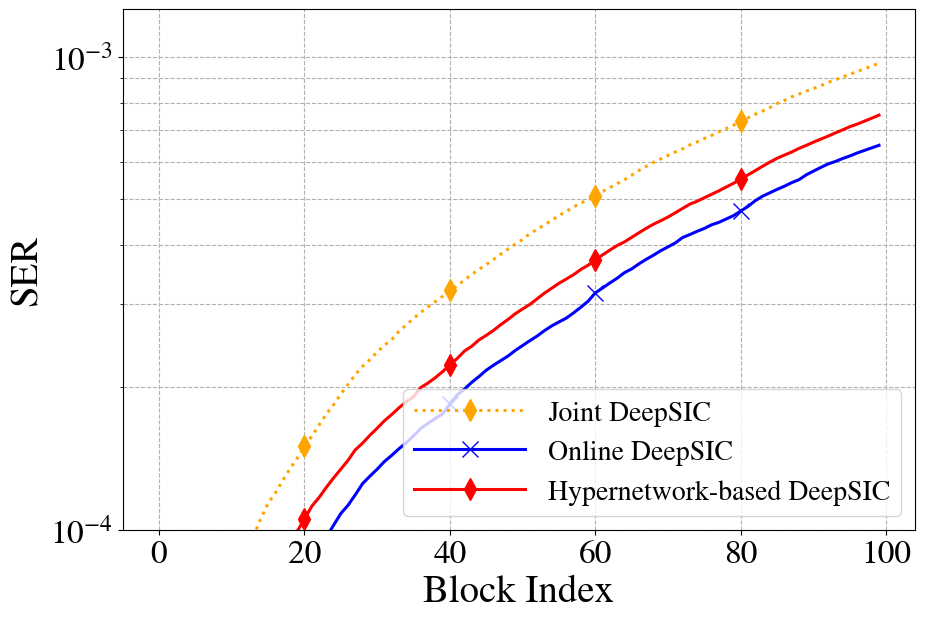}
    \caption{ $K[t] \in \{4,5,6,7,8\}$.}
    \end{subfigure}
    \begin{subfigure}[b]{0.48\textwidth}
    \includegraphics[width=\figwidth,height=\figheight]{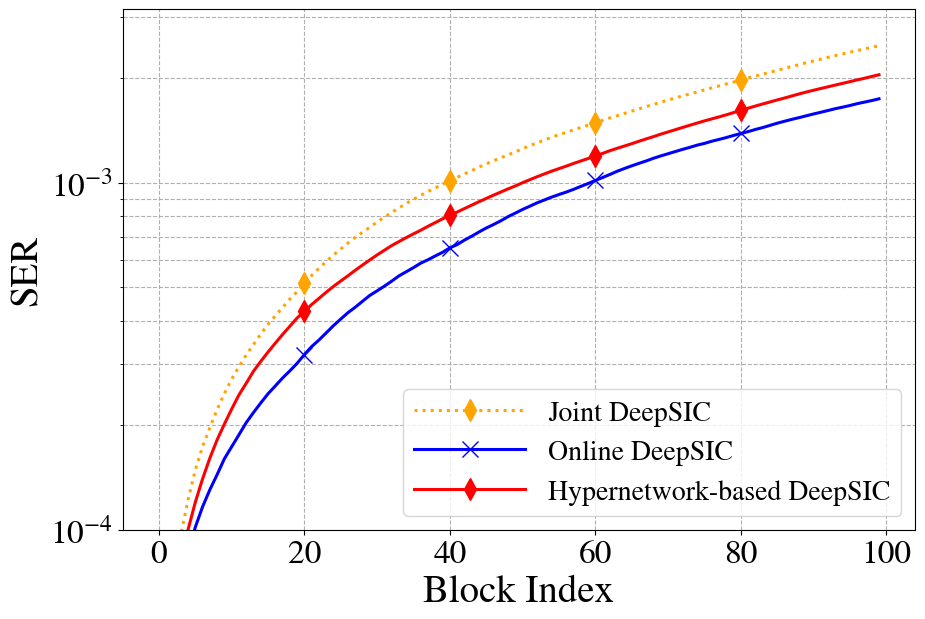}
    \caption{$K[t] \in \{14,15,16,17,18\}$.}
    \end{subfigure}
    \caption{COST2100 channel, time-varying $K[t]$.}
    \label{fig:arch_cost_ser} 
\end{figure}

{\bf COST2100 Channel:}
In Fig.~\ref{fig:arch_cost_ser} we report the \ac{ser} vs. $t$ for COST2100. There, we observe that the performance of the modular hypernetwork is yet better than  joint learning, while improving the complexity (but not the performance) of  online learning  by over $\times{5}$ less computational overhead.

\begin{figure}
    \centering
    \begin{subfigure}[b]{0.48\textwidth}
    \includegraphics[width=\figwidth,height=\figheight]{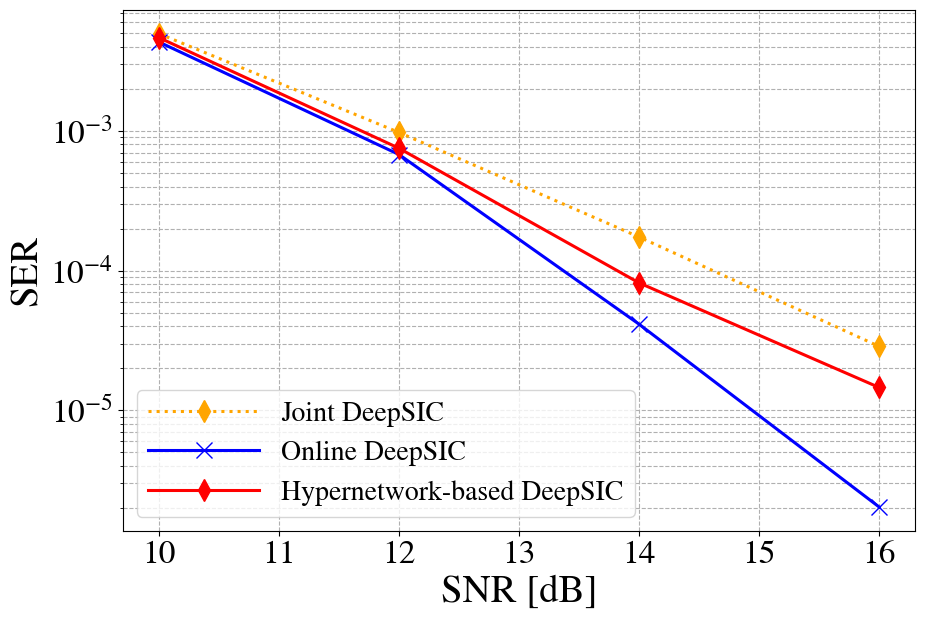}
    \caption{$K[t] \in \{4,5,6,7,8\}$.}
    \label{fig:ser_vs_cost_snr_12}
    \end{subfigure}
    \begin{subfigure}[b]{0.48\textwidth}
    \includegraphics[width=\figwidth,height=\figheight]{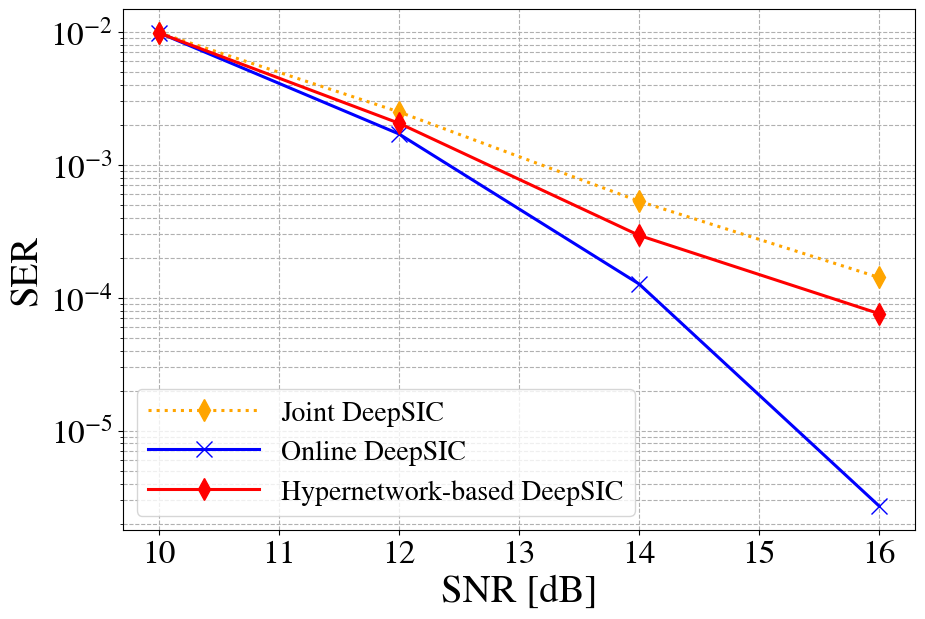}
    \caption{$K[t] \in \{14,15,16,17,18\}$.}
    \end{subfigure}
    \caption{COST 2100 channel, aggregated \ac{ser}.}
    \label{fig:ser_vs_cost_snr} 
    \vspace{-0.2cm}
\end{figure}

To evaluate performance accross a broad range of \acp{snr}, we measure the aggregated average \ac{ser} over the transmission of $T=100$ using different \ac{snr} values (that dictate the variance Gaussian noise vector $\myVec{w}_i[t]$ in \eqref{eqn:GaussianMIMO}). As  observed in Fig.~\ref{fig:ser_vs_cost_snr}, our method matches the \ac{ser} performance of the online training scheme across small to medium \acp{snr}, and is performing better than joint learning in medium-to-high \acp{snr}, yet with smaller complexity and memory footprint. 

\begin{figure}
	\centering
	\includegraphics[width=0.75\columnwidth,height=0.18\textheight]{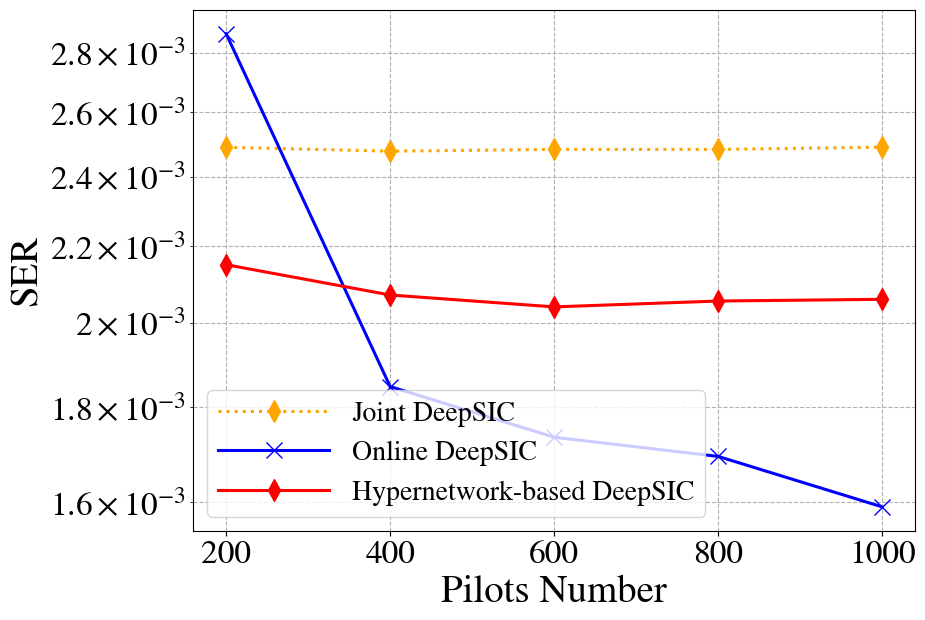}
	\caption{\ac{ser} vs $B^{\rm pilot}$, COST 2100 channel.}
	\label{fig:ser_vs_pilots_number}
\end{figure}

While modular hypernetworks are based on offline training (as in joint learning), their performance also depends on the pilots (as in online learning), from which the features used to set the weights are extracted. Therefore, we next evaluate the effect of varying number of pilots, and robustness to limited pilots. To that aim, we consider the COST 2100 scenario with $12$ dB \ac{snr}, $K[t] \in \{14,15,16,17,18\}$, and $N=20$, employing the same hyperparameters as previously used, but varying the number of pilots each for each $T=100$ blocks transmission.  We observe in Fig.~\ref{fig:ser_vs_pilots_number} that modular hypernetworks can outperform online learning in the small data regime. However, as the available training data grows, our approach does not benefit from additional pilots, similarly to joint learning. Accordingly, the notable complexity reduction of modular hypernetworks comes at the cost of some performance loss compared to online learning.

\section{Conclusion}
\label{sec:conclusion}
We proposed modular hypernetworks  to rapidly adapt deep receivers to channel and network variations. This is achieved by training offline an architecture that generates \ac{dnn} modules online. We provide a complexity analysis, and empirically demonstrate the ability of our approach to provide online adaptation that can approach online training, with similar complexity of pre-trained receivers. 

\bibliographystyle{IEEEtran}  
\bibliography{refs}    

\end{document}